\documentclass[10pt,letterpaper]{article}
\usepackage[T1]{fontenc}
\usepackage{lmodern}
\usepackage[margin=0.83in,top=0.78in,bottom=0.78in]{geometry}
\usepackage{amsmath,amssymb,amsthm,graphicx,microtype}
\usepackage{enumitem}
\usepackage{xcolor}
\usepackage[colorlinks=true,linkcolor=blue!50!black,citecolor=blue!50!black,urlcolor=blue!50!black]{hyperref}
\usepackage[font=small,labelfont=bf]{caption}
\usepackage{titlesec}
\titleformat{\section}{\normalfont\large\bfseries}{\thesection.}{0.55em}{}
\titlespacing*{\section}{0pt}{1.1ex plus 0.5ex minus 0.2ex}{0.6ex plus 0.2ex}
\newcommand{\ket}[1]{\lvert #1\rangle}

\newcommand{\Tr}{\operatorname{Tr}}
\newcommand{\ii}{\mathrm{i}}
\newcommand{\ind}{\mathbf{1}}
\newtheorem{proposition}{Proposition}
\hypersetup{pdftitle={The coupler Eve did not monitor: non-signalling reports from interacting cavities},pdfauthor={Yueying Wu}}
\title{\vspace{-1.4em}\LARGE\bfseries The coupler Eve did not monitor:\\[0.2em]
\Large non-signalling reports from interacting cavities}
\author{Yueying Wu\\[0.2em]
\small Department of Physics, University of Illinois Urbana-Champaign, USA}
\date{\small October 22, 2026}

\begin{document}
\maketitle
\vspace{-1.3em}
\begin{abstract}
Eve receives genuine average photon counts from two opaque devices, finds input-independent single-party report distributions, and sees little activity in a specified leakage monitor. Do these observations establish physical isolation? We construct a cavity-based counterexample. Independently operated controls activate a shared interaction after the inputs arrive. Internal block measurements yield a CHSH score $S=13/4$ for four repetitions, approaching $4$ for large blocks. A fresh shared random bit symmetrizes the binary results without changing their parity; local preparation of vacuum or one-photon output states then produces truthful photon-number reports with exactly non-signalling marginals. A dispersive-bus model also permits weak monitored leakage; an illustrative dissipative calculation gives $S\simeq3.1702$ with ideal output encoding. These scores exceed $2\sqrt2$ because the devices interact during their response, not because quantum mechanics is violated. The construction separates honest measurement, non-signalling reported statistics, and causal isolation, while making the additional randomness, preparation, and monitoring assumptions explicit.
\end{abstract}

\section{Eve's view: two black boxes and a quiet monitor}

Eve supplies Alice and Bob with independent, uniformly random bits $x$ and $y$. Each party has an opaque cavity device, a control knob, and a photon counter. They return only average photon counts from designated local output modes. Eve also watches a specified leakage channel and tests whether either report distribution depends on the other party's input. Our question is what these checks establish about the causal structure inside the boxes.

Our proposed explanation is mundane at the level of causality but concrete at the level of hardware: a shared coupling element was never included in Eve's audit. Her two black boxes have two input ports and two report ports, but their physical interiors are not independent. For example, both modules may connect to an unaudited bus in a common enclosure. The restricted access, monitoring channel, and missed connection are assumptions of this scenario; they are not claims that a real apparatus can defeat arbitrary inspection.

One game trial is an entire \emph{block}. Eve chooses $(x,y)$ once. The devices perform $M$ independently reset internal repetitions with that pair, producing
\begin{equation}
 \bar n_A=\frac1M\sum_{j=1}^M n_{A,j},\qquad
 \bar n_B=\frac1M\sum_{j=1}^M n_{B,j}.
 \label{eq:reports}
\end{equation}
These internal averages are \emph{not} the final reports. Each device thresholds its own average, applies a common random output flip, and physically prepares and measures a local output state. Eve receives the resulting output-mode sample averages $m_A,m_B$, defined in Sec.~\ref{sec:encoding}. This added preparation stage changes the protocol: it does not relabel the original cavity counts as different measured values.

The internal state preparation and phase reference are input independent; ideal copies can be supplied before the inputs. A fresh shared fair bit per block is likewise supplied in advance, independently of the inputs and internal records. It is not disclosed to Eve. The internal block length, output readout length, threshold rule, and deadline are fixed in advance, and every block is retained. Both the internal measurements and the final reports are genuine; no expectation value is inferred exactly from an arbitrary single-photon sample.

The intended rule prohibits communication from input delivery until the reports are fixed. The cheat deliberately violates that rule through the shared interaction. Calling devices black boxes removes the need to know their internal implementation; it does not establish the causal separation required for a Bell test~\cite{brunner2014}.

\section{Inside the boxes: two knobs, one interaction}

Let $a,b$ denote the cavity modes. In an ideal rotating-frame description, the two controls activate
\begin{equation}
 H_{xy}(t)=\hbar xy\,g(t)(a^\dagger b+b^\dagger a),
 \qquad \Theta=\int_0^T g(t)\,dt=\frac\pi4.
 \label{eq:H}
\end{equation}
Alice sets $x$ and Bob sets $y$. Neither has to learn the other's input before operating a knob: the hardware implements the product $xy$. Independent controls do not imply independent dynamics.

Each repetition starts in
\begin{equation}
 \ket{\psi_0}=\frac{\ket{1,0}+\ii\ket{0,1}}{\sqrt2}.
 \label{eq:initial}
\end{equation}
For instance, a photon in $A$ followed by a preparation pulse with mixing angle $-\pi/4$ produces this state. With $\theta=xy\Theta$, the response pulse gives
\begin{equation}
 \ket{\psi_{xy}}=
 \frac{\cos\theta+\sin\theta}{\sqrt2}\ket{1,0}
 +\ii\frac{\cos\theta-\sin\theta}{\sqrt2}\ket{0,1},
 \qquad
 \mu_{A,B}^{xy}=\frac{1\pm\sin(2xy\Theta)}2.
 \label{eq:response}
\end{equation}
For $00$, $01$, and $10$, both conditional means are $1/2$. For $11$, the state becomes $\ket{1,0}$ and the means are $(1,0)$. The response pulse is stopped at the prescribed area; leaving the exchange on indefinitely would produce oscillations. The initial coherence $\langle a^\dagger b\rangle=\ii/2$ sets the direction of transfer. An equal incoherent mixture would remain balanced, and two initially empty cavities would remain empty.

A demonstrated hardware ingredient is the two-pump cavity beam splitter of Gao \emph{et al.}~\cite{gao2018}. Through a shared nonlinear element, pump amplitudes $x\xi_1$ and $y\xi_2$ can generate exchange proportional to $xy\xi_1\xi_2^*$. Assigning one control to each player supplies the proposed two-knob interface. The control fields still reach common hardware. Single-pump frequency shifts, unwanted exchange, and losses must be included in a device-specific implementation. That experiment establishes a coupling mechanism, not its concealment from Eve.

\section{The internal block produces the score}

Each device applies a fixed threshold to its own internal average:
\begin{equation}
 u=\ind[\bar n_A>3/4],\qquad v=\ind[\bar n_B>3/4],
 \qquad \text{target parity }u\oplus v=xy.
 \label{eq:threshold}
\end{equation}
These internal bits define the block correlators below. Section~\ref{sec:encoding} gives final photon-number reports from which Eve obtains binary outputs with exactly the same parity. Thus the score is an ordinary binary-output CHSH expression at the \emph{block} level~\cite{chsh1969}:
\begin{equation}
 E_{xy}=\mathbb E[(-1)^{u+v}\mid x,y],\quad
 S_M=E_{00}+E_{01}+E_{10}-E_{11},\quad
 \omega_M=\frac12+\frac{S_M}{8}.
 \label{eq:chsh}
\end{equation}
For genuinely isolated devices, arbitrary local processing of a block is still a local measurement procedure. The standard bounds remain $S_M\leq2$ for local classical strategies and $S_M\leq2\sqrt2$ for quantum strategies with shared states and local measurements~\cite{tsirelson1980}. Averaging alone does not evade either bound.

In our interacting model, an inactive input has Alice's total count $K\sim\mathrm{Binomial}(M,1/2)$ and Bob's total count $M-K$. The thresholded bits are equal exactly when $M/4\leq K\leq3M/4$, in which case both are zero. Define
\begin{equation}
 q_M=2^{-M}\sum_{k=\lceil M/4\rceil}^{\lfloor3M/4\rfloor}\binom{M}{k}.
 \label{eq:q}
\end{equation}
Input $11$ gives $(u,v)=(1,0)$ with certainty. Consequently,
\begin{equation}
 \omega_M=\frac{3q_M+1}{4},\qquad S_M=6q_M-2.
 \label{eq:score}
\end{equation}
For $M=4$, the only inactive-input failures are four photons all counted at $A$ or all at $B$. Thus $q_4=7/8$ and
\begin{equation}
 \boxed{\omega_4=\frac{29}{32}=0.90625,\qquad S_4=\frac{13}{4}=3.25>2\sqrt2.}
 \label{eq:four}
\end{equation}
These are expected scores over repeated blocks, not a claim of statistical significance from four measurements. Hoeffding's inequality~\cite{hoeffding1963} gives $1-\omega_M\leq(3/2)e^{-M/8}$, so $S_M\to4$, although finite-block success remains below one. Integer thresholds make the approach nonmonotone for small $M$; see Fig.~\ref{fig:protocol}(a).

This numerical exceedance of the quantum benchmark is real for the stated interacting model. Its interpretation as stronger-than-quantum Bell nonlocality would be false: Eq.~\eqref{eq:H} acts jointly after the settings arrive. No fair-sampling or selected-block effect is needed. By contrast, applying parity directly to each raw single-photon count would give $E_{xy}=-1$ for all four inputs and $|S|=2$. The block threshold is an essential part of the score being reported.

\begin{figure}[t]
 \centering
 \includegraphics[width=\linewidth]{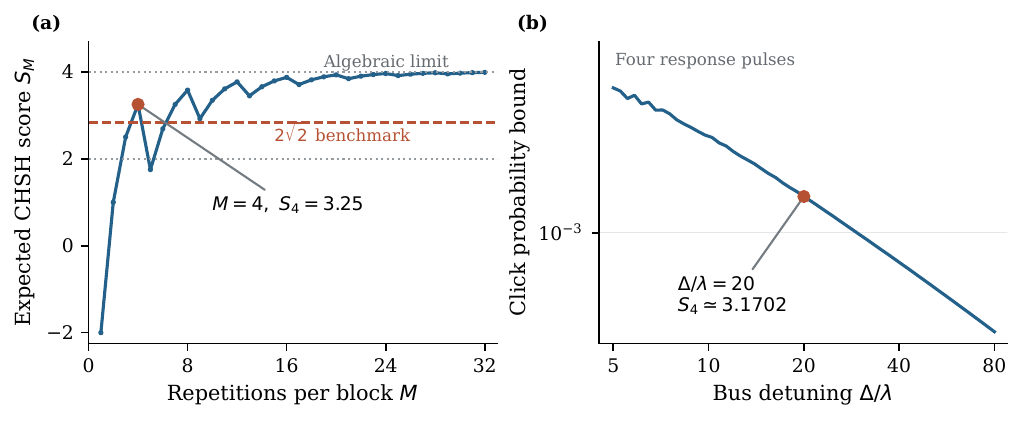}
 \caption{Theoretical predictions, with all outcomes retained. (a) Expected block CHSH score, preserved by shared random flips and ideal physical output encoding. The marked point is $M=4$, $S_4=3.25$; the dashed line is the isolated-device quantum benchmark. Final reports have uniform marginals for every $M$. (b) Upper bound $4\max_{xy}\mathbb E[N_{\rm det}\mid x,y]$ on at least one monitored click during four response pulses, from the dissipative bus model in Sec.~\ref{sec:monitor}. Parameters are $\kappa_C/\lambda=10^{-2}$, $\kappa_s/\lambda=10^{-4}$, and $\eta=1/2$. At $\Delta/\lambda=20$, $S_4\simeq3.1702$ before output errors and the click bound is $1.482\times10^{-3}$. This bound excludes preparation, reset, and output-readout stages.}
 \label{fig:protocol}
\end{figure}

\section{Honest, non-signalling output reports}
\label{sec:encoding}

Without an extra step, the internal averages signal: for example, $\mathbb E[\bar n_A\mid1,0]=1/2$ but $\mathbb E[\bar n_A\mid1,1]=1$. Alice and Bob instead use their fresh shared bit $r\in\{0,1\}$ to form
\begin{equation}
 \alpha=u\oplus r,\qquad \beta=v\oplus r,\qquad
 \alpha\oplus\beta=u\oplus v.
 \label{eq:mask}
\end{equation}
Simultaneous random output flips are a standard way to make binary marginals uniform while preserving correlators; see, for example, Appendix C of Ref.~\cite{bowles2015}. Their role here is to remove signalling from the final report interface, including when the internal distribution itself signals.

\begin{proposition}[Symmetrization of binary reports]
For any conditional internal distribution $P(u,v\mid x,y)$, a shared fair bit independent of $(x,y,u,v)$ produces
\begin{equation}
 Q(\alpha,\beta\mid x,y)=\tfrac12 P(\alpha,\beta\mid x,y)
 +\tfrac12 P(1-\alpha,1-\beta\mid x,y).
 \label{eq:twirl}
\end{equation}
Both marginals of $Q$ are uniform for every input pair, and all four parity correlators are unchanged.
\end{proposition}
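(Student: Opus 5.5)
The plan is to compute $Q$ directly by conditioning on the shared bit $r$, and then read off the marginals and correlators from the resulting formula. First fix $(x,y)$. The bit $r$ is fair and independent of $(x,y,u,v)$, so the joint law of $(u,v,r)$ given $(x,y)$ is $P(u,v\mid x,y)\cdot\tfrac12$. The map $(u,v)\mapsto(u\oplus r,v\oplus r)$ in Eq.~\eqref{eq:mask} is a bijection on $\{0,1\}^2$ for each fixed $r$. Summing over $r$ then gives
\[
Q(\alpha,\beta\mid x,y)=\tfrac12 P(\alpha,\beta\mid x,y)+\tfrac12 P(\alpha\oplus1,\beta\oplus1\mid x,y),
\]
which is Eq.~\eqref{eq:twirl}, since $1-\alpha=\alpha\oplus1$ for bits.

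For the marginals, I would sum Eq.~\eqref{eq:twirl} over $\beta$. This gives $Q_A(\alpha\mid x,y)=\tfrac12 P_A(\alpha\mid x,y)+\tfrac12 P_A(1-\alpha\mid x,y)$. The two terms are the probabilities of complementary values of $u$, so they sum to one and $Q_A(\alpha\mid x,y)=\tfrac12$ for every $\alpha$ and every $(x,y)$. The same argument applied to $\beta$ gives Bob's marginal. This step uses no property of $P$ beyond normalization, which is why it covers internal distributions that themselves signal.

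For the correlators, Eq.~\eqref{eq:mask} gives $\alpha\oplus\beta=u\oplus v$ pointwise, for every value of $r$. Hence $(-1)^{\alpha+\beta}=(-1)^{u+v}$ on every outcome, and taking conditional expectations gives equal $E_{xy}$ for all four input pairs. The same conclusion follows from Eq.~\eqref{eq:twirl}, because both of its terms assign the same parity. Consequently $S_M$ and $\omega_M$ in Eq.~\eqref{eq:chsh} are preserved as well.

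None of these steps is a real obstacle. The only point needing care is the independence assumption. The bit $r$ must be independent of $(u,v)$ \emph{conditionally on} $(x,y)$, and it must remain fair; this follows from the stated independence of $r$ from $(x,y,u,v)$ jointly, and it justifies factorizing the joint law as above. I would state this explicitly, since a bit correlated with the internal records could bias the marginals.
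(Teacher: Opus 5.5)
Your proposal is correct and follows the paper's proof. You sum Eq.~\eqref{eq:twirl} over one output to get $\tfrac12[P_A(\alpha\mid x,y)+P_A(1-\alpha\mid x,y)]=\tfrac12$, and you use the pointwise identity $\alpha\oplus\beta=u\oplus v$ for the correlators; your derivation of Eq.~\eqref{eq:twirl} by conditioning on the fair, independent $r$ only makes explicit a step the paper treats as immediate.
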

\begin{proof}
Summing Eq.~\eqref{eq:twirl} over $\beta$ gives
$[P_A(\alpha\mid x,y)+P_A(1-\alpha\mid x,y)]/2=1/2$;
the same holds for Bob. Equation~\eqref{eq:mask} preserves parity in each realization, not just on average.
\end{proof}

To make these outputs actual photon-number reports, each party locally prepares an output mode in $\ket{\alpha}$ or $\ket{\beta}$, respectively. These can be separate modes or reset and repurposed working cavities after the interaction has stopped. For a prescribed $L\geq1$, repeat the local preparation and photon-number measurement $L$ times with the same encoded bit, and report
\begin{equation}
 m_A=\frac1L\sum_{\ell=1}^L n^{\rm out}_{A,\ell}=\alpha,
 \qquad
 m_B=\frac1L\sum_{\ell=1}^L n^{\rm out}_{B,\ell}=\beta
 \label{eq:outputmeans}
\end{equation}
in the ideal preparation/readout limit. Number eigenstates make these sample averages deterministic conditional on the encoded bits. Eve thresholds each report at $3/4$ and obtains $\alpha,\beta$. The reports are classical numbers; the output cavities need not be sent to Eve. The physical preparation, energy supply, and readout are extra local resources, not free consequences of the passive exchange Hamiltonian.

Consequently, the entire single-party report laws, not only their first moments, obey
\begin{equation}
 \Pr(m_i=0\mid x,y)=\Pr(m_i=1\mid x,y)=\tfrac12,
 \qquad \mathbb E[m_i\mid x,y]=\tfrac12\quad(i=A,B).
 \label{eq:uniform}
\end{equation}
The equality of these across-block means does not erase the correlations between the two reports within each block. For $xy=0$, Eq.~\eqref{eq:q} gives
\begin{equation}
 Q(0,0)=Q(1,1)=q_M/2,\qquad
 Q(0,1)=Q(1,0)=(1-q_M)/2;
 \label{eq:finaloff}
\end{equation}
for $xy=1$, only $(0,1)$ and $(1,0)$ occur, each with probability $1/2$. At $M=4$ the inactive-input probabilities are $7/16,7/16,1/16,1/16$, respectively. The score remains $13/4$. As $M\to\infty$, the full ideal report distribution approaches the Popescu--Rohrlich (PR) box~\cite{popescu1994}, not merely its CHSH value. This is a simulation using an actual interaction; no finite $M$ here attains unit winning probability.

Fresh randomness and a restricted interface matter. Conditioning on a disclosed $r$ can restore the original signalling. Revealing the internal counts also defeats the construction. The proposition concerns the final reports averaged over $r$, not all environmental records. Fixed local preparation/readout channels depending only on the encoded bit preserve no-signalling, although they can reduce the score. Correlated errors or leakage dependent on unreported internal records require a separate analysis.

\section{Why the monitored port can remain quiet}
\label{sec:monitor}

To model a possible monitoring blind spot, introduce a third mode $c$ with positive detuning $\Delta$ and independently switched couplings of magnitude $\lambda$:
\begin{equation}
 \frac{H^{(3)}_{xy}}\hbar
 =\Delta c^\dagger c+\lambda x(a^\dagger c+c^\dagger a)
 -\lambda y(b^\dagger c+c^\dagger b).
 \label{eq:bus}
\end{equation}
The relative minus sign is a fixed coupling phase. This is a separate minimal model for the monitoring argument, not a microscopic reconstruction of Ref.~\cite{gao2018}. Dispersive, virtually populated buses are established interaction mechanisms~\cite{majer2007}.

For $\Delta\gg\lambda$, elimination of $c$ to second order gives
\begin{equation}
 \frac{H_{\rm eff}}\hbar
 =-\frac{\lambda^2}{\Delta}(x n_A+y n_B)
 +\frac{\lambda^2}{\Delta}xy(a^\dagger b+b^\dagger a).
 \label{eq:effective}
\end{equation}
For $11$, the number shifts are equal and contribute only a common phase in the one-excitation sector. For an inactive input they change phases but not populations in this effective model. Hence $g_{\rm eff}=\lambda^2/\Delta$ supplies the desired response at
\begin{equation}
 T=\frac{\pi\Delta}{4\lambda^2},\qquad
 \langle n_C(t)\rangle=O(\lambda^2/\Delta^2).
 \label{eq:scaling}
\end{equation}
The exact lossless occupation is given in the Appendix. At finite detuning there are small bus excursions and unwanted population changes even for single-control inputs: exact mean equality belongs to the ideal effective protocol, not to the full three-mode dynamics.

Suppose Eve's monitor counts photons lost from the bus through one port with rate $\kappa_{\rm mon}$, collecting a fraction $\eta$. With $\rho(t)$ the unconditional state including loss,
\begin{equation}
 \mathbb E[N_{\rm det}]
 =\eta\kappa_{\rm mon}\int_0^T\Tr[n_C\rho(t)]\,dt,
 \qquad
 \Pr(N_{\rm det}\geq1)\leq\mathbb E[N_{\rm det}].
 \label{eq:monitor}
\end{equation}
In the dispersive, weak-error regime, Eqs.~\eqref{eq:scaling}--\eqref{eq:monitor} imply
\begin{equation}
 \mathbb E[N_{\rm det}]=O(\eta\kappa_{\rm mon}/\Delta),
 \qquad g_{\rm eff}T=\pi/4.
 \label{eq:quiet}
\end{equation}
A substantial coherent action and a small expected leakage count can therefore coexist. The claim concerns a specified detector, not the absence of electromagnetic fields or of a physical connection.

For a finite-parameter check, evolve Eq.~\eqref{eq:bus} with Lindblad jump operators $\sqrt{\kappa_s}a$, $\sqrt{\kappa_s}b$, and $\sqrt{\kappa_C}c$, using $\Delta=20\lambda$, $\kappa_C=10^{-2}\lambda$, $\kappa_s=10^{-4}\lambda$, and $\eta=1/2$. Take $\kappa_{\rm mon}=\kappa_C$ and the duration in Eq.~\eqref{eq:scaling}, resetting to $\ket{\psi_0}\ket0_C$ each time. Direct evolution gives $S_4\simeq3.1702$ and a click-probability bound below $1.49\times10^{-3}$ during the four internal response pulses. Symmetrization makes this noisy internal distribution non-signalling at the final interface and preserves its score with ideal output encoding. Cavity loss and residual bus occupation contribute zero internal cavity counts, not discarded trials. These parameters are illustrative, not a hardware calibration, and the calculation does not include output-stage errors or radiation.

Repeated blocks accumulate detection opportunities. Across $R$ response pulses, the union bound is $O(R\eta\kappa_{\rm mon}/\Delta)$, capped at one; quietness of a long experiment is not guaranteed. Larger detuning also lengthens $T$, so storage loss requires $\kappa_s\Delta/\lambda^2\ll1$. Control drives, state preparation, reset, readout radiation, other ports, and direct inspection are outside this monitor model. Their detectability must be assessed separately. In particular, blocks with monitor clicks are not removed from the CHSH calculation.

\section{What Eve can and cannot infer}

The final reports now pass an exact no-signalling check in the stated model, even though the working cavities interact. This removes the simplest marginal diagnostic; it does not make the interaction disappear. An adequately significant score above $2\sqrt2$ still rules out the model of isolated quantum devices with independent settings and the stated scoring procedure. Eve need not locate a coupler to reject that model, but the final marginals alone no longer locate the failure. A quiet specified port, honest output counts, and non-signalling report distributions do not jointly certify causal isolation.

Here a controlled randomization hides an available causal influence in the reported statistics. This is consistent with the role of fine-tuning in classical causal explanations of non-signalling Bell correlations~\cite{wood2015}; the biased-mask calculation in Appendix B makes that dependence explicit for this protocol.

The distinction between per-block reports and across-block means is essential. For any no-signalling reports with finite means, set $d_{xy}=|\mathbb E[m_A\mid x,y]-\mathbb E[m_B\mid x,y]|$. Then
\begin{equation}
 d_{11}\leq d_{10}+d_{00}+d_{01}.
 \label{eq:meanwitness}
\end{equation}
This follows by writing the local means as $A_x,B_y$ and applying the triangle inequality along $A_1,B_0,A_0,B_1$. Thus exact equality of across-block means for $00,01,10$ forces equality for $11$. The raw internal response violates this condition, while the final reports have $d_{xy}=0$ for every input. We have replaced the original mean-inequality target with a game about correlations between random, genuine photon-number reports.

The operational lesson is to audit causal access during the entire interval from input delivery to fixed outputs, rather than treating a passed marginal check as a certificate of separation. Suitable spacetime separation can close the communication loophole~\cite{hensen2015}. Delaying transmission of an already fixed report is harmless by itself; permitting interactions before that report is fixed is the issue. Monitoring extra channels or accessing internal records may reveal the implementation. No claim is made that the complete apparatus is unobservable.

The single-photon implementation uses a nonclassical state, but its internal mean pattern has a coherent-state counterpart. The input $\ket{\gamma}_A\ket{\ii\gamma}_B$ remains a product under the beam splitter~\cite{kim2002}, giving means $(N,N)$ or $(2N,0)$, with $N=|\gamma|^2>0$. Classical amplitudes obey the same transformation. Equation~\eqref{eq:score} specifically uses single-photon counting statistics. Shared output flips are classical, and communication can simulate these correlations. The proposed contribution is an explicit cavity realization and audit model, not a new symmetrization method, interaction mechanism, or quantum advantage.

A useful next experiment would compare raw and encoded marginals, measure the final block score and leakage record, and characterize output preparation and readout alongside the internal interaction. Comparisons under matched time and loss budgets would test whether the full implementation remains useful. Experimental and theoretical collaboration on such tests is welcome. Honest, non-signalling reports can come from devices that communicated.

\appendix
\section{Exact bus occupation and numerical checks}

For Eq.~\eqref{eq:bus} without loss and initial state $\ket{\psi_0}\ket0_C$, put $s=x+y$. For $s>0$, the bright cavity mode $(xa-yb)/\sqrt{s}$ couples to $c$ with strength $\sqrt{s}\lambda$ and initially carries probability $1/2$. The orthogonal mode is dark. Therefore, with $\Omega_s=\sqrt{\Delta^2+4s\lambda^2}$,
\begin{equation}
 n_C^{xy}(t)=\frac{2s\lambda^2}{\Omega_s^2}\sin^2\!\left(\frac{\Omega_s t}{2}\right),
 \qquad
 \int_0^T n_C^{xy}(t)\,dt
 =\frac{s\lambda^2}{\Omega_s^2}\left(T-\frac{\sin(\Omega_sT)}{\Omega_s}\right).
 \label{eq:exactbus}
\end{equation}
For $s=0$ the occupation is zero. In particular $n_C(t)\leq4\lambda^2/\Delta^2$ for this initial state. Substituting the response time yields the inverse-detuning exposure scaling. For nonzero loss, Eq.~\eqref{eq:monitor} instead uses the dissipative occupation; the lossless formula is not substituted as an exact dissipative result.

In the full one-excitation model, let $p_A,p_B$ be the probabilities of a photon at each cavity after one response. A repetition may also give neither cavity a photon. At $M=4$, thresholding gives $u=1$ only if all four counts at $A$ are one, and similarly for $B$. The events are disjoint, so $E_{xy}=1-2(p_A^4+p_B^4)$. The source checks the ideal unitary, four-shot counting, Eq.~\eqref{eq:exactbus}, and the dissipative example. It independently enumerates internal records and mask values to verify the final joint probabilities, uniform marginals, and unchanged scores. All curves are model predictions; no experimental data are reported.

\section{Output-interface imperfections}

Suppose the local encoding and readout stage has a fixed stochastic response $T_A(m\mid\alpha)$, and similarly for Bob. Independent local channels yield
\begin{equation}
 \Pr(m_A\mid x,y)=\tfrac12 T_A(m_A\mid0)+\tfrac12 T_A(m_A\mid1),
 \label{eq:localnoise}
\end{equation}
and the analogous Bob expression. Thus no-signalling of the final reports survives such local imperfections, including input-independent loss or readout noise, even when the reports cease to be exactly $0$ or $1$. Their means need not remain $1/2$, and the score must be recomputed. This statement assumes the output channel has no additional dependence on the internal record or remote input.

For one illustrative error model, let Eve's final thresholded bits suffer independent symmetric flips with probabilities $\epsilon_A,\epsilon_B$, independent of the inputs and encoded bits. Then
\begin{equation}
 S_{\rm obs}=(1-2\epsilon_A)(1-2\epsilon_B)S_M.
 \label{eq:outputerror}
\end{equation}
For equal errors $\epsilon<1/2$, the ideal $M=4$ score remains above $2\sqrt2$ when
$\epsilon<[1-\sqrt{2\sqrt2/(13/4)}]/2\simeq0.0336$.
This simple bit-error model is not a calibration of Fock-state preparation or photon loss.

An imperfect mask has a different effect. If $\Pr(r=1)=p$, independently of the internal data, then
\begin{equation}
 \Pr(\alpha=1\mid x,y)=p+(1-2p)\Pr(u=1\mid x,y).
 \label{eq:biasedmask}
\end{equation}
For a fixed local input, a remote-input dependence in the internal marginal is suppressed by $|1-2p|$, but need not vanish. Exact uniform marginals require the fair-mask assumption (or another exact symmetrization), and a fresh independent mask prevents reuse from exposing correlations across blocks.

\end{document}